\let\proof\relax 
\let\endproof\relax 
\newtheorem{definition}{Definition}
\newtheorem{assumption}{Assumption}
\newtheorem{proposition}{Proposition}
\newtheorem{theorem}{Theorem}
\begin{document}

\title{\huge{Differentially Private ADMM-Based Distributed Discrete Optimal Transport for Resource Allocation}
}

\author{Jason Hughes and Juntao Chen
\thanks{The authors are with the Department of Computer and Information Sciences, Fordham University, New York, NY, 10023 USA. E-mail: \{jhughes50,jchen504\}@fordham.edu}
\thanks{This work was supported in part by the National Science Foundation under Grant ECCS-2138956, and in part by a Faculty Research Grant from Fordham Office of Research.}}

\maketitle

\begin{abstract}
Optimal transport (OT) is a framework that can guide the design of efficient resource allocation strategies in a network of multiple sources and targets. To ease the computational complexity of large-scale transport design, we first develop a distributed algorithm based on the alternating direction method of multipliers (ADMM). However, such a distributed algorithm is vulnerable to sensitive information leakage when an attacker intercepts the transport decisions communicated between nodes during the distributed ADMM updates. To this end, we propose a privacy-preserving distributed mechanism based on output variable perturbation by adding appropriate randomness to each node's decision before it is shared with other corresponding nodes at each update instance. We show that the developed scheme is differentially private, which prevents the adversary from inferring the node's confidential information even knowing the transport decisions. Finally, we corroborate the effectiveness of the devised algorithm through case studies.

\end{abstract}


\section{Introduction}
The optimal transport (OT) paradigm can be leveraged to guide the most efficient allocation of a limited amount of resources from a set of sources to a set of targets by considering their heterogeneous preferences \cite{jhughes2021fair,zhang2019consensus}. The standard OT framework computes the transport strategy in a centralized manner, which requires the source and target nodes to send their information to a centralized transport planner. This centralized computation mechanism is not scalable when the transport network includes a large number of participants. 
Thus, it is imperative to design a computationally efficient scheme that applies to large-scale transport design. 

To this end, distributed algorithm based on the alternating direction method of multipliers (ADMM) can be used to achieve this goal. In the distributed computation scheme, each node communicates directly with the connected nodes regarding the transport decisions and reaches a consensus through iterative negotiations. Under this paradigm, the central planner does not necessarily need to coordinate the resource matching. The distributed OT design eliminates the necessity of a centralized communication network where each node reports their preference information to the central planner. Instead, the communication occurs between each pair of connected source and target nodes enabled by a peer-to-peer network. Thus, the ADMM-based distributed algorithm does not require sharing all the nodes' information over the network.

However, the distributed OT algorithm still faces adversarial threats \cite{hughes2021deceptive}. Specifically, the nodes need to communicate their computed resource transport preferences with the connected nodes at each update step in the algorithm.  This information could be intercepted by an adversary during its transmission over the communication network (e.g., through eavesdropping attack). The attacker can then use it to infer the private information at each participating node (e.g., node's utility parameters used for the design of transport plan). 

The privacy concerns of the distributed OT motivate us to develop an efficient privacy-preserving mechanism that can protect the nodes' sensitive utility information. To do this, we resort to the powerful differential privacy technique \cite{dwork2014_book}. Specifically, we develop an output variable perturbation-based differentially private distributed OT scheme. In this algorithm, instead of sharing the authentic transport strategies directly between connected source and target nodes, each node perturbs their transport decisions by adding a random noise drawn from an appropriate distribution with specified parameters at each step. The proposed algorithm prevents leakage of sensitive information of participants in the network even if the transport strategies shared between nodes during updates are captured by the adversary.

The contributions of this paper are presented as follows.
\begin{enumerate}
    \item We develop a distributed OT design framework based on the  alternating direction method of multipliers to compute the OT strategies efficiently.
    \item We incorporate privacy consideration into the distributed OT and propose a differentially private distributed OT algorithm based on an output variable perturbation mechanism.
    \item We demonstrate the effectiveness of the developed algorithm through case studies and characterize the trade-off between a node's privacy and transport utility.
\end{enumerate}

\textit{Related Works.} Differential privacy has been applied to many fields, especially in artificial intelligence and machine learning. Differential privacy has been studied with specific application to ADMM-based distributed algorithms for both learning and optimization in \cite{2020HuangDPADMM,czhang2019ADMMopt}. Specifically, perturbation-based ADMM algorithms were developed to improve privacy in classification learning problems  \cite{zhang2017privateADMMlearning,zhang2018improving}. Differential privacy has also been leveraged to investigate privacy issues in empirical risk minimization \cite{chaudhuri2011privateERM}, support vector machines \cite{zhang2019privateSVM} and deep learning \cite{Abadi2016privateDL}. Additionally, differential privacy has been applied to tackle the privacy issues in various societal applications, including fog computing \cite{du2017differential}, private data trading through contracts \cite{khalili2021designing}, federated learning in the Internet of things \cite{zhao2020local}, and vehicular networks \cite{zhang2018distributed}. In this work, we address the privacy concerns in the ADMM-based distributed OT algorithm based on differential privacy and develop a protection scheme that has a theoretical guarantee to maintain the privacy of the information at each participating transport node. 


\section{Discrete Optimal Transport over Networks and Distributed Algorithm}\label{sec:DDOT}

This section presents the framework of discrete optimal transport over a network and then develops a distributed algorithm to compute the optimal transport plan.

\subsection{Discrete Optimal Transport}
We denote by $\mathcal{X}:=\{1, ..., |\mathcal{X}|\}$ a set of destination/target nodes that receive the resources, and $\mathcal{Y}:=\{|\mathcal{X}|+1, ..., |\mathcal{X}|+|\mathcal{Y}|\}$ a set of origin/source nodes that distribute resources to the targets over a transport network. Additionally, we define $\mathcal{P} = \mathcal{X} \cup \mathcal{Y}$ as the set of all nodes. Each source node $y\in\mathcal{Y}$ is connected to a number of target nodes denoted by $\mathcal{X}_y$, representing that $y$ can choose to allocate its resources to a specific group of destinations $\mathcal{X}_y$. Similarly, each target node $x\in\mathcal{X}$ can receive resources from multiple source nodes, and this set of resource suppliers to target $x$ is denoted by $\mathcal{Y}_x$. It can be seen that the resources are transported over a bipartite network, where one side of the network consists of all source nodes and the other includes all destination nodes. This bipartite network is not necessarily complete because of constrained matching policies between participants. We further denote by $\mathcal{E}$ the set of all feasible transport paths in the network, i.e., $\mathcal{E}:=\{\{x,y\}|x\in\mathcal{X}_y,y\in\mathcal{Y}\}$. Here, $\mathcal{E}$ also refers to the set of all edges in the established bipartite graph for resource transportation.

We next denote by $\pi_{xy}\in\mathbb{R}_+$ the amount of resources transported from the origin node $y\in\mathcal{Y}$ to the destination node $x\in\mathcal{X}$, where $\mathbb{R}_+$ is the set of nonnegative real numbers. Let $\Pi:=\{\pi_{xy}\}_{x\in\mathcal{X}_y,y\in\mathcal{Y}}$ be the designed transport plan. Then, the centralized optimal transport problem can be formulated as follows:
\begin{equation}\label{OT1:eqn}
\begin{aligned}
    \max_{\Pi}\ \sum_{x\in\mathcal{X}} \sum_{y\in\mathcal{Y}_x}& t_{xy}(\pi_{xy}) + \sum_{y\in\mathcal{Y}} \sum_{x\in\mathcal{X}_y} s_{xy}(\pi_{xy})\\
    \mathrm{s.t.}\quad &\underline{p}_{x}\leq \sum_{y\in\mathcal{Y}_x} \pi_{xy}\leq \bar{p}_{x},\ \forall x\in\mathcal{X},\\
    &\underline{q}_{y}\leq \sum_{x\in\mathcal{X}_y} \pi_{xy}\leq \bar{q}_{y},\ \forall y\in\mathcal{Y},\\
    &\pi_{xy}\geq 0,\ \forall \{x,y\} \in\mathcal{E},
\end{aligned}
\end{equation}
where $t_{xy}:\mathbb{R}_+\rightarrow\mathbb{R}$ and $s_{xy}:\mathbb{R}_+\rightarrow\mathbb{R}$ are utility functions for target node $x$ and source node $y$, respectively. Furthermore, $\bar{p}_x\geq \underline{p}_{x}\geq 0$, $\forall x\in\mathcal{X}$ and $\bar{q}_y\geq \underline{q}_{y}\geq 0$, $\forall y\in\mathcal{Y}$. The constraints $\underline{p}_{x}\leq \sum_{y\in\mathcal{Y}_x} \pi_{xy}\leq \bar{p}_{x}$ and $\underline{q}_{y}\leq \sum_{x\in\mathcal{X}_y} \pi_{xy}\leq \bar{q}_{y}$ capture the limitations on the amount of requested and transferred resources at the target $x$ and source $y$, respectively. 

We have the following assumption on the utility functions.
\begin{assumption}\label{assump:1}
The utility functions $t_{xy}$ and $s_{xy}$ are concave and monotonically increasing on $\pi_{xy}$, $\forall x\in\mathcal{X},\forall y\in\mathcal{Y}$. Moreover, they are continuously differentiable with $t'_{xy} \leq \rho$ and $s'_{xy} \leq \rho$, where $\rho$ is a positive constant.
\end{assumption}

A rich class of functions satisfy the conditions in Assumption \ref{assump:1}. For example, the utility functions $t_{xy}$ and $s_{xy}$ can be linear on $\pi_{xy}$, indicating a linear growth of benefits on the amount of transferred and consumed resources. 

\subsection{Distributed Optimal Transport}\label{sec:dis_alg}
Next, we establish a distributed algorithm for computing the optimal transport strategy in \eqref{OT1:eqn}. Our first step is to reformulate the optimization problem by introducing ancillary variables $\pi_{xy,t}$ and $\pi_{xy,s}$. The additional subscripts $t$ and $s$ indicate that the corresponding parameters belong to the target node or the source node, respectively. We then set $\pi_{xy} = \pi_{xy,t}$ and $\pi_{xy} = \pi_{xy,s}$, indicating that the solutions proposed by the targets and sources are consistent. This reformulation facilitates the design of a distributed algorithm which allows us to iterate through the process in obtaining the optimal transport plan. To this end, the reformulated optimal transport problem is presented as follows:
\begin{equation}\label{OT2:eqn}
\begin{aligned}
\min_{\Pi_t \in \mathcal{F}_t, \Pi_s \in \mathcal{F}_s,\Pi} & -\sum_{x\in\mathcal{X}} \sum_{y\in\mathcal{Y}_x} t_{xy}(\pi_{xy,t}) - \sum_{y\in\mathcal{Y}} \sum_{x\in\mathcal{X}_y} s_{xy}(\pi_{xy,s}) \\
\mathrm{s.t.}\quad & \pi_{xy,t} = \pi_{xy},\ \forall \{x,y\}\in\mathcal{E},\\
& \pi_{xy,s} = \pi_{xy},\ \forall \{x,y\}\in\mathcal{E},
\end{aligned}
\end{equation}
where $\Pi_t:=\{\pi_{xy,t}\}_{x\in\mathcal{X}_y,y\in\mathcal{Y}}$, $\Pi_s:=\{\pi_{xy,s}\}_{x\in\mathcal{X},y\in\mathcal{Y}_x}$, $\mathcal{F}_t := \{ \Pi_t | \pi_{xy,t} \geq 0, \underline{p}_x \leq \sum_{y \in \mathcal{Y}_x} \pi_{xy,t} \leq \bar{p}_x,\ \{x,y\} \in \mathcal{E}\}$, and $\mathcal{F}_s := \{ \Pi_s | \pi_{xy,s} \geq 0, \underline{q}_y \leq \sum_{x \in \mathcal{X}_y} \pi_{xy,s} \leq \bar{q}_y,\ \{x,y\} \in \mathcal{E} \}$.

We resort to the alternating direction method of multipliers (ADMM) \cite{boyd2011distributed} to develop a distributed computational algorithm. First, let $\alpha_{xy,s}$ and $\alpha_{xy,t}$ be the Lagrangian multipliers associated with the constraint $\pi_{xy,s} = \pi_{xy}$ and $\pi_{xy,t} = \pi_{xy}$, respectively. The Lagrangian function associated with the optimization problem \eqref{OT2:eqn} can then be written as follows:
\begin{equation}\label{Lag:eqn}
\begin{split}
    &L \left(\Pi_{t}, \Pi_{s}, \Pi, \alpha_{xy,t}, \alpha_{xy,s} \right) =- \sum_{x\in\mathcal{X}} \sum_{y\in\mathcal{Y}_x} t_{xy}(\pi_{xy,t})\\
    & - \sum_{y\in\mathcal{Y}} \sum_{x\in\mathcal{X}_y} s_{xy}(\pi_{xy,s})  +\sum_{x\in\mathcal{X}} \sum_{y\in\mathcal{Y}_x} \alpha_{xy,t} (\pi_{xy,t} - \pi_{xy}) \\ &+\sum_{y\in\mathcal{Y}} \sum_{x\in\mathcal{X}_y} \alpha_{xy,s} (\pi_{xy} - \pi_{xy,s}) + \frac{\eta}{2} \sum_{x\in\mathcal{X}} \sum_{y\in\mathcal{Y}_x} (\pi_{xy,t} - \pi_{xy})^2 \\
    &+ \frac{\eta}{2} \sum_{y\in\mathcal{Y}} \sum_{x\in\mathcal{X}_y} (\pi_{xy} - \pi_{xy,s})^2,
\end{split}
\end{equation}
where $\eta > 0$ is a positive scalar constant controlling the convergence rate in the algorithm designed below.

Note that in \eqref{Lag:eqn}, the last two terms $\frac{\eta}{2} \sum_{x\in\mathcal{X}} \sum_{y\in\mathcal{Y}_x} (\pi_{xy,t} - \pi_{xy})^2$ and $\frac{\eta}{2} \sum_{y\in\mathcal{Y}} \sum_{x\in\mathcal{X}_y} (\pi_{xy} - \pi_{xy,s})^2$, acting as penalization, are quadratic. Hence, the Lagrangian function $L$ is strictly convex, ensuring the existence of a unique optimal solution. 

We next apply ADMM to the minimization problem in \eqref{OT2:eqn}. The designed distributed algorithm is presented in the following proposition.

\begin{proposition}\label{prop:1}
 The iterative steps of applying ADMM to problem \eqref{OT2:eqn} are summarized as follows:
\begin{equation}\label{ADMM1_eqn1}
\begin{split}
    \Pi_{x,t}(k+1) &\in \arg \min_{\Pi_{x,t}\in\mathcal{F}_{x,t}} - \sum_{y\in\mathcal{Y}_x} t_{xy}(\pi_{xy,t}) \\ 
    &+ \sum_{y\in\mathcal{Y}_x} \alpha_{xy,t}(k) \pi_{xy,t} + \frac{\eta}{2} \sum_{y\in\mathcal{Y}_x} (\pi_{xy,t} - \pi_{xy}(k))^2,
\end{split}
\end{equation}
\begin{equation}\label{ADMM1_eqn2}
\begin{aligned}
        \Pi_{y,s}(k+&1) \in \arg \min_{\Pi_{y,s}\in\mathcal{F}_{y,s}} - \sum_{x\in\mathcal{X}_y} s_{xy}(\pi_{xy,s})  \\ &-\sum_{x\in\mathcal{X}_y} \alpha_{xy,s}(k)\pi_{xy,s} + \frac{\eta}{2} \sum_{x\in\mathcal{X}_y} (\pi_{xy}(k) - \pi_{xy,s})^2,
\end{aligned}
\end{equation}
\begin{equation}\label{ADMM1_eqn3}
\begin{split}
    \pi_{xy}(&k+1)= \arg \min_{\pi_{xy}} - \alpha_{xy,t}(k)\pi_{xy} + \alpha_{xy,s}(k)\pi_{xy} \\
    &+\frac{\eta}{2}(\pi_{xy,t}(k+1) - \pi_{xy})^2 + \frac{\eta}{2}(\pi_{xy} - \pi_{xy,s}(k+1))^2,
\end{split}
\end{equation}
\begin{equation}\label{ADMM1_eqn4}
\begin{split}
    \alpha_{xy,t}(k+1) = \alpha_{xy,t}(k) + \eta(\pi_{xy,t}(k+1)-\pi_{xy}(k+1))^2,
\end{split}
\end{equation}
\begin{equation}\label{ADMM1_eqn5}
\begin{split}
    \alpha_{xy,s}(k+1) = \alpha_{xy,s}(k) + \eta(\pi_{xy}(k+1)-\pi_{xy,s}(k+1))^2,
\end{split}
\end{equation}
where $\Pi_{\tilde{x},t}:=\{\pi_{xy,t}\}_{y\in\mathcal{Y}_x,x=\tilde{x}}$ represents the solution at target node $\tilde{x}\in\mathcal{X}$, and $\Pi_{\tilde{y},s}:=\{\pi_{xy,s}\}_{x\in\mathcal{X}_y,y=\tilde{y}}$ represents the proposed solution at source node $\tilde{y}\in\mathcal{Y}$. In addition, $\mathcal{F}_{x,t} := \{ \Pi_{x,t} | \pi_{xy,t} \geq 0, y\in\mathcal{Y}_x, \underline{p}_x \leq \sum_{y \in \mathcal{Y}_x} \pi_{xy,t} \leq \bar{p}_x\}$, and $\mathcal{F}_{y,s} := \{ \Pi_{y,s} | \pi_{xy,s} \geq 0, x\in\mathcal{X}_y, \underline{q}_y \leq \sum_{x \in \mathcal{X}_y} \pi_{xy,s} \leq \bar{q}_y\}$.
\end{proposition}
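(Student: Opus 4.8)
The plan is to obtain each of the five updates \eqref{ADMM1_eqn1}--\eqref{ADMM1_eqn5} by a direct, mechanical application of the standard ADMM scheme to the augmented Lagrangian \eqref{Lag:eqn}, exploiting the separable consensus structure of the reformulation \eqref{OT2:eqn}. I would group the primal variables into two blocks: the \emph{local} node variables $(\Pi_t,\Pi_s)$ and the \emph{global} consensus variable $\Pi$. ADMM then prescribes minimizing $L$ first over the local block and then over the consensus block, holding the most recent values of the other variables fixed, and finally incrementing the multipliers $\alpha_{xy,t},\alpha_{xy,s}$ by a dual-ascent step of size $\eta$.

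First I would derive the two local updates. With $\Pi(k)$ and the current multipliers held fixed, $L$ in \eqref{Lag:eqn} is separable in $\Pi_t$ and $\Pi_s$, since these two collections of variables interact only through the (now constant) consensus variables $\pi_{xy}(k)$. Discarding every term that does not depend on $\Pi_t$ --- the $s_{xy}$ utilities, the $\alpha_{xy,s}$ terms, and the constant $-\alpha_{xy,t}(k)\pi_{xy}(k)$ --- what remains is a sum over target nodes $x\in\mathcal{X}$ of mutually independent subproblems, each in the variables $\{\pi_{xy,t}\}_{y\in\mathcal{Y}_x}$. Because the feasible set factorizes as $\mathcal{F}_t=\prod_{x\in\mathcal{X}}\mathcal{F}_{x,t}$, the joint minimization splits node by node and yields precisely \eqref{ADMM1_eqn1}. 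A symmetric argument, retaining the $s_{xy}$ and $\alpha_{xy,s}$ terms and invoking $\mathcal{F}_s=\prod_{y\in\mathcal{Y}}\mathcal{F}_{y,s}$, produces \eqref{ADMM1_eqn2}.

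Next I would treat the consensus step. Fixing $\Pi_t(k+1)$, $\Pi_s(k+1)$, and the multipliers, the dependence of $L$ on $\Pi$ is confined to terms indexed separately by each edge $\{x,y\}\in\mathcal{E}$, so the minimization over $\Pi$ decouples into one unconstrained strictly convex quadratic per edge. Collecting the $\pi_{xy}$-dependent summands reproduces the objective in \eqref{ADMM1_eqn3}, and strict convexity makes its minimizer unique, which justifies writing an equality there. The two multiplier updates \eqref{ADMM1_eqn4}--\eqref{ADMM1_eqn5} then follow from the ADMM dual-ascent rule, which adds to each multiplier the quantity $\eta$ times the corresponding primal residual, $\pi_{xy,t}(k+1)-\pi_{xy}(k+1)$ and $\pi_{xy}(k+1)-\pi_{xy,s}(k+1)$ respectively, evaluated at the newly computed iterates.

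The main obstacle --- and in truth the only genuinely delicate part --- is the bookkeeping that verifies each block minimization really decouples into the claimed per-node or per-edge subproblems. This means carefully identifying which summands of \eqref{Lag:eqn} are constant in the active block and confirming that the constraint sets $\mathcal{F}_t$ and $\mathcal{F}_s$ factorize across nodes with no coupling constraints linking different $x$ or different $y$. Once that is established, the strict convexity of $L$ recorded after \eqref{Lag:eqn} guarantees that every subproblem is well posed, so the algorithm is well defined.
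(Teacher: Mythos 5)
Your proposal is correct, and at the top level it follows the same strategy as the paper: apply the generic two-block ADMM template to the augmented Lagrangian \eqref{Lag:eqn} and then argue that each block minimization decouples into per-node and per-edge subproblems. The bookkeeping, however, is genuinely different. The paper vectorizes the variables into two \emph{overlapping} stacked vectors $\vec{x}=[\vec{\Pi}_{t}^T,\vec{\Pi}^T]^T$ and $\vec{y}=[\vec{\Pi}^T,\vec{\Pi}_{s}^T]^T$, encodes the consensus constraints as a single matrix equation $A\vec{x}=\vec{y}$, invokes the standard iterations and convergence result of \cite{boyd2011distributed}, and then compresses the entire decoupling argument into one sentence (``because there is no coupling \ldots the iterations can be decomposed''). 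You instead use the cleaner consensus split --- local block $(\Pi_t,\Pi_s)$ versus global block $\Pi$ --- which matches the structure of \eqref{ADMM1_eqn1}--\eqref{ADMM1_eqn5} more transparently: separability of $L$ in $\Pi_t$ and $\Pi_s$ given $\Pi(k)$ explains why both local updates use $\pi_{xy}(k)$ (a Gauss--Jacobi step within one block), and the per-edge quadratic explains the closed-form consensus step. What the paper's route buys is an immediate appeal to off-the-shelf convergence theory for the matrix form $A\vec{x}=\vec{y}$; what your route buys is an explicit verification of exactly the decoupling claim the paper leaves implicit, which is, as you say, the only delicate part. One further point in your favor: your dual-ascent step adds $\eta$ times the \emph{residual}, not its square, i.e.\ $\alpha_{xy,t}(k+1)=\alpha_{xy,t}(k)+\eta(\pi_{xy,t}(k+1)-\pi_{xy}(k+1))$. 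This is the correct ADMM rule, and it is the only version consistent with the simplification \eqref{ADMM2_eqn4} in Proposition~2; the squares appearing in \eqref{ADMM1_eqn4}--\eqref{ADMM1_eqn5} as printed are typographical errors, which your derivation implicitly exposes.
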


\begin{proof}
See Appendix \ref{proof_prop:1}.
\end{proof}

We can simplify steps \eqref{ADMM1_eqn1}-\eqref{ADMM1_eqn5} down to four steps, and the results are summarized below.

\begin{proposition}\label{prop:2}
The iterations \eqref{ADMM1_eqn1}-\eqref{ADMM1_eqn5} can be simplified as
\begin{equation}\label{ADMM2_eqn1}
\begin{split}
    \Pi_{x,t}(k+1) &\in \arg \min_{\Pi_{x,t}\in\mathcal{F}_{x,t}} - \sum_{y\in\mathcal{Y}_x} t_{xy}(\pi_{xy,t}) \\
   & + \sum_{y\in\mathcal{Y}_x} \alpha_{xy}(k) \pi_{xy,t} + \frac{\eta}{2} \sum_{y\in\mathcal{Y}_x} \left(\pi_{xy,t} - \pi_{xy}(k)\right)^2,
\end{split}
\end{equation}
\begin{equation}\label{ADMM2_eqn2}
\begin{split}
    \Pi_{y,s}(k+&1) \in \arg \min_{\Pi_{y,s}\in\mathcal{F}_{y,s}} - \sum_{x\in\mathcal{X}_y} s_{xy}(\pi_{xy,s}) \\ &-\sum_{x\in\mathcal{X}_y} \alpha_{xy}(k)\pi_{xy,s} + \frac{\eta}{2} \sum_{x\in\mathcal{X}_y} \left(\pi_{xy}(k) - \pi_{xy,s}\right)^2,
\end{split}
\end{equation}
\begin{equation}\label{ADMM2_eqn3}
\begin{split}
    \pi_{xy}(k+1) = \frac{1}{2} \left(\pi_{xy,t}(k+1) + \pi_{xy,s}(k+1)\right),
\end{split}
\end{equation}
\begin{equation}\label{ADMM2_eqn4}
\begin{split}
    \alpha_{xy}(k+1) = \alpha_{xy}(k) + \frac{\eta}{2}\left(\pi_{xy,t}(k+1) - \pi_{xy,s}(k+1)\right).
\end{split}
\end{equation}
\end{proposition}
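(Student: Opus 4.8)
The plan is to show that the two multiplier sequences $\{\alpha_{xy,t}(k)\}$ and $\{\alpha_{xy,s}(k)\}$ coincide throughout the iteration, so that they may be replaced by a single multiplier $\alpha_{xy}(k)$, and at the same time to solve the $\pi_{xy}$-subproblem \eqref{ADMM1_eqn3} in closed form. Once the invariant $\alpha_{xy,t}(k)=\alpha_{xy,s}(k)=:\alpha_{xy}(k)$ is established, substituting it into \eqref{ADMM1_eqn1} and \eqref{ADMM1_eqn2} immediately yields \eqref{ADMM2_eqn1} and \eqref{ADMM2_eqn2}, while the closed form of \eqref{ADMM1_eqn3} collapses to the average \eqref{ADMM2_eqn3} and the dual recursions to \eqref{ADMM2_eqn4}.

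First I would solve \eqref{ADMM1_eqn3}. For fixed $\pi_{xy,t}(k+1)$ and $\pi_{xy,s}(k+1)$ this is an unconstrained, strictly convex quadratic in the scalar $\pi_{xy}$; setting its derivative to zero gives
\[
\pi_{xy}(k+1)=\tfrac12\big(\pi_{xy,t}(k+1)+\pi_{xy,s}(k+1)\big)+\tfrac{1}{2\eta}\big(\alpha_{xy,t}(k)-\alpha_{xy,s}(k)\big).
\]
Thus \eqref{ADMM2_eqn3} is precisely the specialization of this update to the case where the two multipliers agree, which motivates proving that they do.

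Next I would prove, by induction on $k$, that the difference $\beta_{xy}(k):=\alpha_{xy,t}(k)-\alpha_{xy,s}(k)$ vanishes. Taking the natural initialization $\alpha_{xy,t}(0)=\alpha_{xy,s}(0)$ gives the base case $\beta_{xy}(0)=0$. For the inductive step I would subtract the two dual recursions \eqref{ADMM1_eqn4} and \eqref{ADMM1_eqn5} in their standard residual form, the multiplier being incremented by $\eta$ times the corresponding primal residual $\pi_{xy,t}(k+1)-\pi_{xy}(k+1)$ and $\pi_{xy}(k+1)-\pi_{xy,s}(k+1)$, obtaining $\beta_{xy}(k+1)=\beta_{xy}(k)+\eta\big(\pi_{xy,t}(k+1)+\pi_{xy,s}(k+1)-2\pi_{xy}(k+1)\big)$; substituting the closed form of $\pi_{xy}(k+1)$ from the previous step makes the bracket equal to $-\tfrac1\eta\beta_{xy}(k)$, so the two terms cancel and $\beta_{xy}(k+1)=0$. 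With $\alpha_{xy,t}(k)=\alpha_{xy,s}(k)=:\alpha_{xy}(k)$ in hand, the $\pi_{xy}$ update reduces to \eqref{ADMM2_eqn3}, and plugging $\pi_{xy}(k+1)=\tfrac12\big(\pi_{xy,t}(k+1)+\pi_{xy,s}(k+1)\big)$ into either dual recursion produces the increment $\tfrac{\eta}{2}\big(\pi_{xy,t}(k+1)-\pi_{xy,s}(k+1)\big)$, which is exactly \eqref{ADMM2_eqn4}.

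I expect the main obstacle to be the bookkeeping in the inductive step: one must use the closed-form $\pi_{xy}$-update together with the opposite signs with which $\alpha_{xy,t}$ and $\alpha_{xy,s}$ enter the Lagrangian \eqref{Lag:eqn} (namely $-\alpha_{xy,t}\pi_{xy}$ against $+\alpha_{xy,s}\pi_{xy}$) for the residual difference to telescope exactly to zero. The remaining manipulations are routine scalar quadratic minimizations.
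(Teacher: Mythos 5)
Your proof is correct and takes essentially the same route as the paper's one-sentence proof — characterize the minimizer of \eqref{ADMM1_eqn3} in closed form, then substitute it into the dual updates read in their standard linear-residual form (a necessary reading: the squared residuals in \eqref{ADMM1_eqn4}--\eqref{ADMM1_eqn5} are typos, since with them the common dual increment would be $\tfrac{\eta}{4}\left(\pi_{xy,t}(k+1)-\pi_{xy,s}(k+1)\right)^2$ rather than \eqref{ADMM2_eqn4}, and the appendix's proof of Proposition \ref{prop:1} states the dual update without the square). Your explicit induction that $\alpha_{xy,t}(k)=\alpha_{xy,s}(k)$ for all $k$, which is what licenses collapsing the two multipliers into the single $\alpha_{xy}(k)$ appearing in \eqref{ADMM2_eqn1}--\eqref{ADMM2_eqn4}, is exactly the step the paper leaves implicit, so yours is a more complete write-up of the same argument.
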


\begin{proof} 
The simplification can be obtained straightforwardly by first characterizing the solution to \eqref{ADMM1_eqn3} and then substituting it into \eqref{ADMM1_eqn4} and \eqref{ADMM1_eqn5}.
\end{proof}

For convenience, we summarize the distributed OT algorithm into Algorithm \ref{Alg:1}.

\section{Differentially Private Distributed Optimal Transport}\label{sec:DPOT}

In this section, we first present the privacy concerns in the developed distributed OT in Section \ref{sec:DDOT}. We then develop a differentially private distributed OT algorithm that preserves nodes' privacy explicitly during decision updates. 

\subsection{Privacy Concerns in the Distributed OT}
In the previous distributed OT algorithm, the intermediate results are shared between connected nodes during updates. This sharing mechanism raises privacy concerns as an adversary that can access this result (e.g., through eavesdropping attack) has the ability to infer the participants' private information. Specifically, the adversary could leverage the compromised information $\Pi_{x,t}(k)$ and $\Pi_{y,s}(k)$ at each update step, $k$, to infer the node's private information including the sensitive preference parameters in the utility functions $t_{xy}$ and $s_{xy}$. We denote the set of private preference information at node $p$ by $D_{p}$, $p\in\mathcal{P}$.

\begin{algorithm}[!h]
\caption{Distributed OT Algorithm}\label{Alg:1}
\begin{algorithmic}[1]
\While {$\Pi_{x,t}$ and $\Pi_{y,s}$ not converging}
\State Compute $\Pi_{x,t}(k+1)$  using \eqref{ADMM2_eqn1}, for all $x\in\mathcal{X}_y$
\State Compute $\Pi_{y,s}(k+1)$  using \eqref{ADMM2_eqn2}, for all $y\in\mathcal{Y}_x$
\State Compute $\pi_{xy}(k+1)$  using \eqref{ADMM2_eqn3}, for all $\{x,y\}\in \mathcal{E}$
\State Compute $\alpha_{xy}(k+1)$  using \eqref{ADMM2_eqn4}, for all $\{x,y\}\in \mathcal{E}$
\EndWhile
\State \textbf{return} $\pi_{xy}(k+1)$, for all $\{x,y\}\in \mathcal{E}$
\end{algorithmic}
\end{algorithm}

We next use an example to further illustrate node's private information set. Specifically, we consider utility functions admitting a linear form for both the sender and receiver: $t_{xy}(\pi_{xy}) = \delta_{xy}\pi_{xy}$ and $s_{xy}(\pi_{xy}) = \gamma_{xy}\pi_{xy}$, where $\delta_{xy},\gamma_{xy}\in\mathbb{R}_+$. Then, for a target node $x\in\mathcal{X}$, we have set $D_x = \{\delta_{xy}:\forall y\in\mathcal{Y}_x\}$. Similarly, for a source node $y\in\mathcal{Y}$, we have set $D_y = \{\gamma_{xy}:\forall x\in\mathcal{X}_y\}$. The information contained in $D_p$ is crucial for developing optimal transport plans. Leakage of such private information is undesired in many resource allocation scenarios, especially those with societal impacts. For example, in the distribution of scarce vaccine resources, these preference parameters could indicate the severity of epidemics in different neighborhoods (modeled by nodes). It is obvious that each participant does not want to leak this piece of information to other unauthorized parties.

To this end, we aim to protect the privacy of each node in the transport network using differential privacy \cite{dwork2014_book}. Specifically, we propose to add randomness to the transport decisions communicated between each pair of source-target nodes during updates, preventing the adversary from learning the sensitive utility parameters of nodes simply based on the transport decisions. To achieve this goal, first, let $D_{p}$ and $D_{p}'$ be two information/data sets differ by one data point (utility parameter). In other words, their \textit{Hamming Distance} is equal to 1, denoted by $H(D_{p},D_{p}') = 1$. Here, $H(D_{p},D_{p}')= \sum_{i=1}^{|D_p|}\textbf{1}\{i : d_i \neq d_i'\}$, where $d_i$ and $d_i'$ denote the $i^{\mathrm{th}}$ data point in the information set $D_p$ and $D_p'$, respectively. Recall that the data points in these sets refer to the nodes' utility parameters which we aim to protect from leakage under the condition that the adversary intercepts the transport plans. The formal definition of differential privacy is presented below.

\begin{definition}[$\beta_p(k)$-Differential Privacy]\label{beta-DP:def}
Consider the transport network $\mathcal{G}=\{\mathcal{P},\mathcal{E}\}$, where $\mathcal{P}$ is composed of both source nodes and target nodes, and $\mathcal{E}$ is a set of edges connecting the nodes. At each node $p\in\mathcal{P}$, there is an information set $D_{p}$ which is used to compute the resource transport plan. 
Let $R$ be a randomized counterpart of Algorithm \ref{Alg:1}. Further, let $\beta(k) = \left(\beta_1(k), \beta_2(k),...,\beta_{|\mathcal{P}|}(k)\right) \in \mathbb{R}_+^{|\mathcal{P}|}$, where $\beta_{p}(k) \in \mathbb{R}_+$ is the privacy parameter of node $p$ at iteration $k$. Consider the outputs $\Pi_{x,t}(k)$ and $\Pi_{y,s}(k)$ at iteration $k$ of Algorithm \ref{Alg:1}. Let $D_p'$ be any information set such that $H(D_p',D_p) = 1$ and $\widetilde{\Pi}_x^t(k)$ and $\widetilde{\Pi}_y^s(k)$ be the corresponding outputs of Algorithm \ref{Alg:1} while using the information set $D'_p$. The algorithm $R$ is $\beta_p(k)$-differentially private for any $D'_p$ for all nodes $p \in \mathcal{P}$ and for all possible sets of outcome solutions $S$, if the following condition is satisfied at every iteration $k$:
\begin{equation}
\begin{aligned}
\mathrm{Pr}[\Pi_p(k) \in S] \leq \exp{(\beta_p(k))} \cdot \mathrm{Pr}[\widetilde{\Pi}_p \in S],
\end{aligned}
\end{equation}
where $\Pi_p(k)=\begin{cases}\Pi_{p,t}(k),\ \mathrm{if}\ p\in\mathcal{X},\\ \Pi_{p,s}(k),\ \mathrm{if}\ p\in\mathcal{Y}, \end{cases}$ and $\widetilde{\Pi}_p(k)=\begin{cases}\widetilde{\Pi}_{p,t}(k),\ \mathrm{if}\ p\in\mathcal{X},\\ \widetilde{\Pi}_{p,s}(k),\ \mathrm{if}\ p\in\mathcal{Y}. \end{cases}$
\end{definition}

\subsection{Output Variable Perturbation}
In order to ensure that the sensitive preference information at each node remains private when transport plans are published over the network, we develop a differentially private algorithm based on output variable perturbation.  This algorithm involves adding random noise to the output decision variables $\Pi_{x,t}(k+1)$ and $\Pi_{y,s}(k+1)$ during updates. More specifically, the random noise vectors, $\epsilon_{x}(k+1) \in \mathbb{R}^{|\mathcal{Y}_x|}$ and $\epsilon_{y}(k+1) \in \mathbb{R}^{|\mathcal{X}_y|}$ are added to the variables $\Pi_{x,t}(k+1)$ and $\Pi_{y,s}(k+1)$ obtained by \eqref{ADMM2_eqn1} and \eqref{ADMM2_eqn2}, respectively.

Recall that $p\in\mathcal{P} = \mathcal{X} \cup \mathcal{Y}$ and thus $p = x$, $\forall x \in \mathcal{X}$, and $p = y$, $\forall y \in \mathcal{Y}$. The random noise vector $\epsilon_p(k)$ is generated according to a distribution with density proportional to $e^{-\xi_p(k)||\epsilon||}$. Here, $\xi_p(k) = \frac{\rho}{\eta}\beta_p(k)$, where $\beta_p$ is a privacy term at each node $p$. 
Thus, the proposed solutions at  target node $x$ and source node $y$ at step $k+1$ admit
\begin{equation}
\begin{split}
    \Pi_{x,t}^{*}(k+1) = \Pi_{x,t}(k+1) + \epsilon_{x}(k+1),\\
    \Pi_{y,s}^{*}(k+1) = \Pi_{y,s}(k+1) + \epsilon_{y}(k+1),
\end{split}
\end{equation}
where $\Pi_{x,t}^{*}$ and $\Pi_{y,s}^{*}$ are perturbed solutions of $\Pi_{x,t}$ and $\Pi_{y,s}$, respectively. The distributed OT algorithm with output perturbation includes the following steps:
\begin{equation}\label{ADMM3_eqn1}
\begin{split}
    \Pi_{x,t}(k+1) \in \arg \min_{\Pi_{x}^t\in\mathcal{F}_{x}^t} - \sum_{y\in\mathcal{Y}_x} t_{xy}(\pi_{xy,t}) \qquad \qquad \\
    + \sum_{y\in\mathcal{Y}_x} \alpha_{xy}(k) \pi_{xy,t} + \frac{\eta}{2} \sum_{y\in\mathcal{Y}_x} \left(\pi_{xy,t} - \pi_{xy}(k)\right)^2,
\end{split}
\end{equation}
\begin{equation}\label{ADMM3_pvp1}
\begin{split}
    \Pi_{x,t}^{*}(k+1) = \Pi_{x,t}(k+1) + \epsilon_{x}(k+1),
\end{split}
\end{equation}
\begin{equation}\label{ADMM3_eqn2}
\begin{aligned}
    \Pi_{y,s}(k+1) \in \arg \min_{\Pi_{y,s}\in\mathcal{F}_{y,s}} - \sum_{x\in\mathcal{X}_y} s_{xy}(\pi_{xy,s}) \qquad \qquad \\ -\sum_{x\in\mathcal{X}_y} \alpha_{xy}(k)\pi_{xy,s} + \frac{\eta}{2} \sum_{x\in\mathcal{X}_y} \left(\pi_{xy}(k) - \pi_{xy,s}\right)^2,
\end{aligned}
\end{equation}
\begin{equation}\label{ADMM3_pvp2}
\begin{split}
    \Pi_{y,s}^{*}(k+1) = \Pi_{y,s}(k+1) + \epsilon_{y}(k+1),
\end{split}
\end{equation}
\begin{equation}\label{ADMM3_eqn3}
\begin{split}
    \pi_{xy}^*(k+1) = \frac{1}{2} \left(\pi_{xy,t}^{*}(k+1) + \pi_{xy,s}^{*}(k+1)\right),
\end{split}
\end{equation}
\begin{equation}\label{ADMM3_eqn4}
\begin{split}
    \alpha_{xy}(k+1) = \alpha_{xy}(k) + \frac{\eta}{2}\left(\pi_{xy,t}^{*}(k+1) - \pi_{xy,s}^{*}(k+1)\right).
\end{split}
\end{equation}

\begin{figure}[!t]
    \centering
    \includegraphics[width=.95\columnwidth]{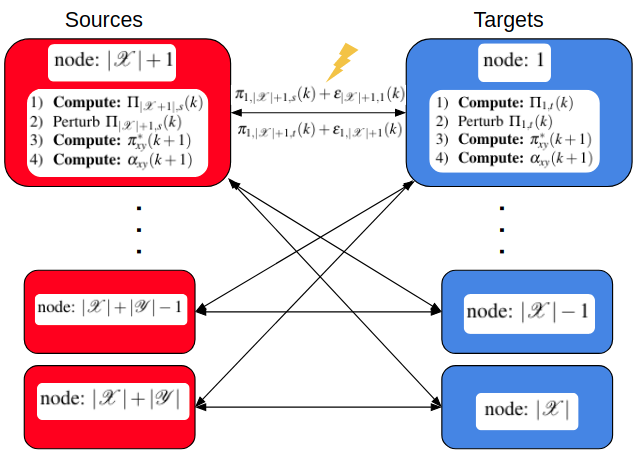}
    \caption{Illustration of the differentially private distributed OT scheme. The information exchanged between nodes is susceptible to be intercepted by the adversary (e.g., by eavesdropping attack to the wireless channel). Hence, an appropriate random noise is added to the outputs at each update step. }
    \label{fig:info-flow}
\end{figure}

As a result of the perturbation in \eqref{ADMM3_pvp1} and \eqref{ADMM3_pvp2}, $\Pi_{x,t}^{*}(k)$ and $\Pi_{y,s}^{*}(k)$ are randomized.  
Specifically, within each iteration, the node perturbs the output variable $\Pi_{x,t}(k)$ or $\Pi_{y,s}(k)$ respectively in order to obtain $\Pi_{x,t}^{*}(k)$ or $\Pi_{y,s}^{*}(k)$. The proposed scheme is further illustrated in Fig. \ref{fig:info-flow}. It is important to note that the information sets at each node, i.e., $D_p$ containing sensitive utility parameters, remain untouched and not perturbed. Due to the random output perturbation, the transport strategy does not converge to a deterministic value compared with the distributed algorithm in Section \ref{sec:dis_alg}. Instead, the algorithm converges approximately and oscillates within a bounded interval. The magnitude of the oscillation is directly related to the differential privacy parameter $\beta_p$ chosen by each node $p\in\mathcal{P}$. When $\beta_p$ becomes larger, $\forall p\in\mathcal{P}$, the differentially privacy algorithm tends to converge to the same solution yielded by Algorithm \ref{Alg:1}.
Since noise is added to each output, the solution will oscillate around the optimal solution. We will test the convergence of the proposed algorithm using case studies. 
For convenience, the differentially private distributed OT algorithm based on the output variable perturbation is summarized in Algorithm \ref{Alg:pvp}.

\begin{algorithm}[!b]
\caption{Differentially Private Distributed OT Algorithm With Output Variable Perturbation}\label{Alg:pvp}
\begin{algorithmic}[1]
\For {$k=0,1,2,...$}
\For {$x\in\mathcal{X}_y$}
\State Compute $\Pi_{x,t}(k+1)$  using \eqref{ADMM3_eqn1}
\State Compute $\Pi_{x,t}^{*}(k+1)$ using \eqref{ADMM3_pvp1}
\EndFor
\For{$y\in\mathcal{Y}_x$}
\State Compute $\Pi_{y,s}(k+1)$  using \eqref{ADMM3_eqn2}
\State Compute $\Pi_{y,s}^{*}(k+1)$ using \eqref{ADMM3_pvp2}
\EndFor
\State Compute $\pi_{xy}^*(k+1)$  using \eqref{ADMM3_eqn3}, for all $\{x,y\}\in \mathcal{E}$
\State Compute $\alpha_{xy}(k+1)$  using \eqref{ADMM3_eqn4}, for all $\{x,y\}\in \mathcal{E}$
\EndFor
\State \textbf{return} $\pi_{xy}^*(k+1)$, for all $\{x,y\}\in \mathcal{E}$
\end{algorithmic}
\end{algorithm}

We further have the following Theorem \ref{thm:pvp} to theoretically guarantee the privacy-preserving property of Algorithm \ref{Alg:pvp}.

\begin{theorem}\label{thm:pvp}
The proposed Algorithm \ref{Alg:pvp} is $\beta$-differentially private with $\beta_p(k)$ for node $p$ at iteration $k$. Let $Q(\Pi_{x,t}^{*}|D_x)$ and $Q(\Pi_{x,t}^{*}|D_x')$ be the probability density functions for $\Pi_{x,t}^{*}$ given the information sets $D_x$ and $D_x'$ such that $H(D_x,D_x') = 1$. The ratio of probability density of $\Pi_{x,t}^{*}$ is bounded:
\begin{equation}\label{pix-bound}
    \frac{Q(\Pi_{x,t}^{*}(k)|D_x)}{Q(\Pi_{x,t}^{*}(k)|D_x')} \leq e^{\beta_x(k)}.
\end{equation}
It follows similarly for the probability density on the source side, $\Pi_{y,s}^{*}$, i.e.,
\begin{equation}\label{piy-bound}
    \frac{Q(\Pi_{y,s}^{*}(k)|D_y)}{Q(\Pi_{y,s}^{*}(k)|D_y')} \leq e^{\beta_y(k)}.
\end{equation}
Note that \eqref{pix-bound} and \eqref{piy-bound} directly imply $\frac{\mathrm{Pr}(\Pi_{x,t}^{*}(k)|D_x)}{\mathrm{Pr}(\Pi_{x,t}^{*}(k)|D_x')} \leq e^{\beta_x(k)}$ and $ \frac{\mathrm{Pr}(\Pi_{y,s}^{*}(k)|D_y)}{\mathrm{Pr}(\Pi_{y,s}^{*}(k)|D_y')} \leq e^{\beta_y(k)}$, respectively.
\end{theorem}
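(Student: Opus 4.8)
The plan is to follow the classical output-perturbation recipe for differential privacy, in which the privacy guarantee is reduced to a \emph{sensitivity} bound on the unperturbed optimizer. I first fix the iteration index $k$ and condition on the quantities $\alpha_{xy}(k)$ and $\pi_{xy}(k)$ fed into the update \eqref{ADMM3_eqn1}, so that the minimizer $\Pi_{x,t}(k+1)$ is a deterministic function of the information set alone. Writing $u$ for the solution computed under $D_x$ and $u'$ for the solution computed under $D_x'$, the perturbed output $\Pi_{x,t}^{*}=u+\epsilon_x$ has density $Q(\cdot\,|D_x)$ equal to the noise density shifted by $u$. Because $\epsilon_x$ is drawn over all of $\mathbb{R}^{|\mathcal{Y}_x|}$ with density proportional to $e^{-\xi_x(k)\|\epsilon\|}$, the normalizing constant is independent of the data and cancels in the ratio, giving for any realized value $v$
\begin{equation*}
\frac{Q(v\,|D_x)}{Q(v\,|D_x')} = e^{-\xi_x(k)\left(\|v-u\|-\|v-u'\|\right)} \leq e^{\xi_x(k)\,\|u-u'\|},
\end{equation*}
where the inequality is the reverse triangle inequality. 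This isolates $\|u-u'\|$, the sensitivity of the ADMM subproblem, as the only remaining quantity to control.

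The second and main step is to bound this sensitivity. Since $D_x$ and $D_x'$ differ in a single utility parameter, the two objectives in \eqref{ADMM3_eqn1} differ only through the summand associated with one affected edge $\hat{y}\in\mathcal{Y}_x$; write the objective under $D_x$ as $g$ and that under $D_x'$ as $g+h$, where $h$ collects this single-edge discrepancy, and both are minimized over the \emph{common} feasible set $\mathcal{F}_{x,t}$. The quadratic penalty $\frac{\eta}{2}\sum_{y}(\pi_{xy,t}-\pi_{xy}(k))^2$ makes $g$ (and $g+h$) $\eta$-strongly convex, since each $-t_{xy}$ is convex by Assumption \ref{assump:1}. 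I would write the first-order variational optimality conditions for the constrained minimizers $u$ and $u'$ and combine them with the strong-convexity inequality $(\nabla g(u')-\nabla g(u))^{\top}(u'-u)\geq\eta\|u'-u\|^2$. After cancellation this yields $\eta\|u'-u\|^2 \leq -\nabla h(u')^{\top}(u'-u) \leq \|\nabla h(u')\|\,\|u'-u\|$ by Cauchy--Schwarz, hence $\|u-u'\|\leq \|\nabla h(u')\|/\eta$.

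It remains to bound $\|\nabla h(u')\|$. Because $h$ depends on a single coordinate, its gradient has exactly one nonzero entry, equal to the difference of the derivatives of the two edge-$\hat{y}$ utility functions evaluated at $u'$; by the monotonicity and derivative bound $t'_{xy}\leq\rho$ in Assumption \ref{assump:1}, both derivatives lie in $[0,\rho]$, so $\|\nabla h(u')\|\leq\rho$ and therefore $\|u-u'\|\leq \rho/\eta$. Substituting into the exponent gives $\xi_x(k)\,\|u-u'\|\leq \xi_x(k)\cdot(\rho/\eta)$, which equals $\beta_x(k)$ provided $\xi_x(k)$ is calibrated so that $\xi_x(k)\,\rho/\eta=\beta_x(k)$, establishing \eqref{pix-bound}. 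The source-side bound \eqref{piy-bound} follows by the identical argument applied to \eqref{ADMM3_eqn2} with $s_{xy}$ in place of $t_{xy}$, and the probability statements follow by integrating the density bounds over any measurable outcome set $S$.

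I expect the sensitivity estimate to be the crux. The delicate points are handling the constraint set $\mathcal{F}_{x,t}$ correctly, since the minimizers are constrained and one must use variational inequalities rather than $\nabla g(u)=0$; verifying that the feasible set is genuinely identical under $D_x$ and $D_x'$ so that only the objective changes; and checking that the normalizing constant of the noise density is data-independent so that it cancels in the ratio. The strong convexity supplied by the penalty parameter $\eta$ is precisely what makes the optimizer Lipschitz in the perturbing utility term, and the uniform derivative bound $\rho$ is what converts a one-point data change into the clean sensitivity $\rho/\eta$.
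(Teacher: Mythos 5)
Your proposal is correct and follows essentially the same route as the paper's proof: bound the noise-density ratio, reduce it via the reverse triangle inequality to the sensitivity $\|u-u'\|$ of the ADMM subproblem, bound that sensitivity by $\rho/\eta$ using $\eta$-strong convexity plus the derivative bound from Assumption \ref{assump:1}, and calibrate $\xi_x(k)=\beta_x(k)\,\eta/\rho$. If anything, your treatment is slightly more careful than the paper's: you invoke the variational-inequality optimality conditions appropriate for the \emph{constrained} minimizers over $\mathcal{F}_{x,t}$, whereas the paper writes the unconstrained condition $\nabla g(W)=0$, and you make explicit the data-independence of the noise normalizing constant, which the paper leaves implicit.
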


\begin{proof}
We first show the bounded ratio in \eqref{pix-bound}.
We have $\frac{Q(\Pi_{x,t}^{*}(k)|D_x)}{Q(\Pi_{x,t}^{*}(k)|D_x')} =\frac{F_x(\epsilon_x(k))}{F_x(\epsilon_x'(k))}= \frac{e^{-\xi_x(k)||\epsilon_x(k)||}}{e^{-\xi_x(k)||\epsilon_x'(k)||}}$.  Our goal is to find a $\xi_x(k)$ such that the following inequality holds $\xi_x(k)(||\epsilon_x(k)|| - ||\epsilon_x'(k)||) \leq \beta_p(k)$.
Let $W = \arg \min_{\Pi_{x,t}} f_x(k|D_x)$ and $W' = \arg \min_{\Pi_{x,t}}f_x(k|D_x')$, where $f_x(k)$ is the objective function for the target node $x\in\mathcal{X}$ at iteration $k$, shown in \eqref{ADMM3_eqn1}. Also, let $g$ and $h$ be defined at each node $x \in \mathcal{X}$ such that $g(\Pi_{x,t}^{*}(k)) = f_x(k|D_x)$ and $h(\Pi_{x,t}^{*}(k)) = f_x(k|D_x') - f_x(k|D_x)$.

Therefore,
$
h(\Pi_{x,t}^{*}(k)) = - \tilde{t}_{xy}(\pi_{xy,t}) + t_{xy}(\pi_{xy,t}),
$
where $\tilde{t}_{xy}$ refers to the altered utility function due to the difference between $D_x'$ and $D_x$.
Assumption \ref{assump:1} implies that $f_x(k|D_p) = g(\Pi_{x,t}^{*}(k))$ and $f_x(k|D_x') = g(\Pi_{x,t}^{*}(k)) + h(\Pi_{x,t}^{*}(k)) $ are both convex. We differentiate $h(\Pi_{x,t}^{*}(k))$ with respect to $\Pi_{x,t}^{*}(k)$ and get:
\begin{equation*}
    \nabla h(\Pi_{x,t}^{*}(k)) = -\tilde{t}'_{xy}(\pi_{xy,t}) + t'_{xy}(\pi_{xy,t}).
\end{equation*}
Assumption \ref{assump:1} further implies that $0 \leq t'_{xy} \leq \rho$. Thus, $||\nabla h(\Pi_{x,t}^{*})|| \leq \rho$. From the definitions of $W$ and $W'$, we have $\nabla g(W) = \nabla g(W') + \nabla h(W') = 0$. Based on Lemma 14 in \cite{Schwartz2007} and knowing that $g(\cdot)$ is $\eta$-strongly convex, the following inequality holds: $\langle \nabla g(W) - g(W'), W-W'\rangle \geq  \eta || W - W' ||^2.$ Thus, by Cauchy-Schwartz inequality, we obtain
\begin{equation*}
\begin{split}
    || W - W' || \cdot ||\nabla h(W')|| \geq (W -W')^T \nabla h(W') = \\
    \langle \nabla g(W) - g(W'), W - W' \rangle \geq \eta ||W - W'||^2.
\end{split}
\end{equation*}
Dividing both sides by $\eta ||W-W'||$ yields 
$
    ||W-W'|| \leq \frac{1}{\eta}|| \nabla h(W')|| \leq \frac{\rho}{\eta}.
$
From \eqref{ADMM3_pvp1}, we have
$
    ||W-W'|| = ||\epsilon_x(k) - \epsilon'_x(k)|| \leq \frac{1}{\eta}||\nabla h(W')||.
$
Thus, we obtain
$$
    \xi_x(k) (||\epsilon_x(k)|| - ||\epsilon'_x(k)||) \leq \xi_x(k)(||\epsilon_x(k) - \epsilon'_x(k)||) \leq \frac{\rho}{\eta}\xi_x(k).
$$
By choosing $\xi_x(k) = \frac{\eta}{\rho}\beta_p(k)$, the inequality $\xi_x(k) (||\epsilon_x(k) - \epsilon'_x(k)||) \leq \beta_p(k)$ holds. Thus, the output variable perturbation is $\beta_p$-differentially private for target node $x \in \mathcal{X}$. The proof follows identically for the perturbed output variable $\Pi_{y,s}^{*}(k)$ at the source node $y \in \mathcal{Y}$ and hence omitted.
\end{proof}

In summary, the proposed Algorithm \ref{Alg:pvp} guarantees the privacy of all participating nodes during their decision sharing. 

\section{Numerical Case Studies}\label{sec:case}
In this section, we corroborate the effectiveness of the developed differentially private algorithm and show how the added privacy impacts the transport plan and its efficiency.

We construct a transport network with four source nodes and thirty target nodes in which every source node is connected to all target nodes, i.e., the network is complete. The upper bounds at the target nodes $\bar{p}_x$ are kept small (smaller than 5), while the upper bounds at the source nodes $\bar{q}_y$ are relatively larger (between 20 and 40). Such selection yields that the resources at the origin can be transported to heterogeneous target nodes. Additionally, we consider linear utility functions $t_{xy}(\pi_{xy}) = \delta_{xy}\pi_{xy}$, and $s_{xy}(\pi_{xy}) = \gamma_{xy}\pi_{xy}, \forall\{x,y\} \in \mathcal{E}$. The utility parameters $\delta_{xy}$ and $\gamma_{xy}$ are randomly chosen integers between 1 and 5 for each pair of connection, $\forall \{x,y\} \in \mathcal{E}$. 

In the following study, we investigate the impact of privacy parameter $\beta_p$ on the transport utility. According to the definition, a smaller $\beta_p$ yields a higher level of privacy. We compare the results for two sets of $\beta_p$. For the first one, we assign a value of $1$ to $\beta_p$, $p\in\mathcal{P}$. For the larger value of $\beta_p$  we use 1000. Furthermore, we select $\eta =1$ and $\rho=2$.

\begin{figure*}[!t] 
\centering
\subfigure[Social Utility]{\includegraphics[width=0.69\columnwidth]{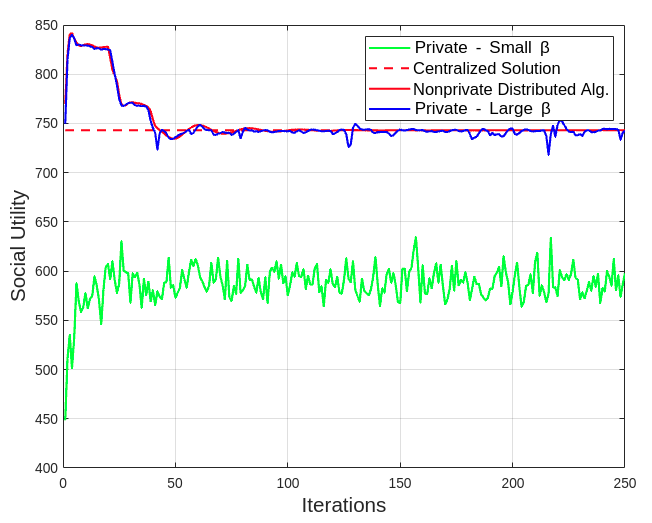}\label{fig:f2_1}}
\subfigure[Transport Plan]{\includegraphics[width=0.67\columnwidth]{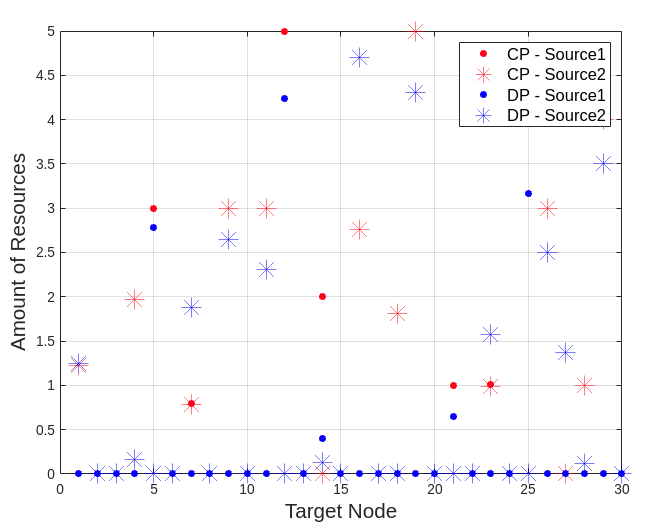}\label{fig:f2_2}}
\subfigure[Privacy and Transport Efficiency Tradeoff ]{\includegraphics[width=0.66\columnwidth]{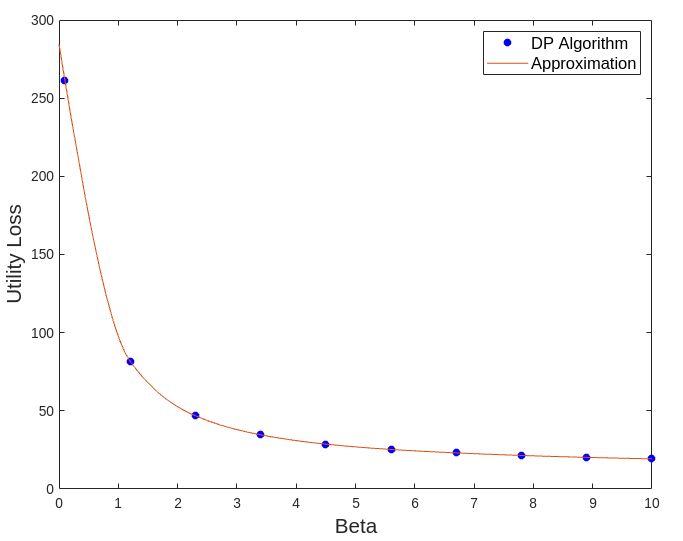}\label{fig:f2_3}}
\caption{(a) shows the performance of the proposed algorithms. (b) depicts the optimal transport plans designed by the central planner (CP) and the solution given by the distributed differentially private (DP) algorithm. (c) shows an increase of the privacy level (smaller $\beta_p$) decreases the transport utility, reflecting the trade-off between privacy and transport efficiency.}
\label{fig:f2}
\end{figure*}

We leverage Algorithms \ref{Alg:1} and \ref{Alg:pvp} to compute the transport plans. The results are shown in Fig. \ref{fig:f2}. First, we observe that in Fig. \ref{fig:f2_1}, the trajectory of transport plan yielded by the differentially private algorithm converges approximately to a certain value. The oscillation at the tail is due to the random noise added to the decision at each output perturbation step. We can also see that when $\beta_p$ is small, the resulting social utility (i.e., transport efficiency), which is an aggregation of the utilities of all participating nodes, is relatively small.  In comparison, when $\beta_p$ is large, the social utility is close to the one returned by Algorithm \ref{Alg:1} where differential privacy is not incorporated. Fig. \ref{fig:f2_3} further shows this phenomenon and reveals the inherent trade-off between the amount of added privacy and the transport efficiency. Fig. \ref{fig:f2_2} illustrates how the privacy factor affects the transport plan. The decreased optimality due to the privacy promotion indicates that the resource allocation is no longer taking full advantage of how much source nodes can provide or how much target nodes can request. For example, the target node 12 can request at most 5 units of resources, and does so when privacy is not added to the algorithm. When privacy is concerned, it only requests and receives 4.2 units of resources and hence the social utility is decreased. 

\section{Conclusion}\label{sec:conclusion}
This paper has developed a differentially private distributed optimal transport algorithm with a theoretical guarantee of achieved privacy. The algorithm protects the sensitive information at each node by perturbing the output of the transport schemes shared between connected nodes during updates. Under the designed mechanism, even if the transport decision is intercepted during its transmission, the adversary still cannot discover the underlying sensitive information used in the transport strategy design. The privacy level for each node can be determined appropriately by considering its trade-off with the resulting transport efficiency. Future work includes extending the current model-based distributed optimal transport framework to data-driven learning-based optimal transport while considering data privacy in the learning process.

\bibliographystyle{IEEEtran}
\bibliography{references.bib}

\appendix

\subsection{Proof of Proposition \ref{prop:1}}\label{proof_prop:1}
\begin{proof}
Let $\Vec{x} = [\Vec{\Pi}_{x,t}^T, \Vec{\Pi}^T]^T$, $\Vec{y} = [\Vec{\Pi}^T, \Vec{\Pi}_{y,s}^T]^T$, and $\alpha = [\{\alpha_{xy,t}\}^T, \{\alpha_{xy,s}\}^T]^T$, where $\Vec{}$ denotes the vectorization operator. We note that these vectors are all $2|\mathcal{E}| \times 1$, where $|\mathcal{E}|$ denotes the number of connections between targets and sources. Now we can write the constraints in matrix form such that $A\Vec{x} = \vec{y}$ where $A = [\textbf{I},\textbf{0},\textbf{I},\textbf{0}]$. Here $\textbf{I}$ and $\textbf{0}$ denote the identity and zero matrices respectively, both of which are  $|\mathcal{E}| \times |\mathcal{E}|$. Next, we note that $\Vec{x} \in \mathcal{F}_{\Vec{x},t}$ and $\Vec{y} \in \mathcal{F}_{\Vec{y},s}$, where
$
      \mathcal{F}_{\Vec{x},t} = \{ \Vec{x} | \pi_{xy,t} \geq 0, \underline{p}_x \leq \sum_{y \in \mathcal{Y}_x} \pi_{xy,t} \leq \bar{p}_x, \{x,y\} \in \mathcal{E} \},\ 
       \mathcal{F}_{\Vec{y},s} := \{ \Vec{y} | \pi_{xy,s} \geq 0, \underline{q}_y \leq \sum_{x \in \mathcal{X}_y} \pi_{xy,s} \leq \bar{q}_y, \{x,y\} \in \mathcal{E}  \}.
$
In turn we can solve the minimization in \eqref{OT2:eqn} with the iterations: 1)
$
    \Vec{x}(k+1) \in \arg \min_{\Vec{x} \in \mathcal{F}_{x,t}} L(\Vec{x},\Vec{y}(k),\alpha(k));
$
2)
$
    \Vec{y}(k+1) \in \arg \min_{\Vec{y} \in \mathcal{F}_{y,s}} L(\Vec{x}(k),\Vec{y},\alpha(k));
$
3)
$
    \alpha(k+1) = \alpha(k) + \eta(A\Vec{x}(k+1) - \Vec{y}(k+1)),
$
whose convergence is proved \cite{boyd2011distributed}.
Because there is no coupling among $\Pi_{x,t}, \Pi_{y,s}, \pi_{xy}, \alpha_{xy,t},$ and $\alpha_{xy,s}$, the above iterations can be decomposed to \eqref{ADMM1_eqn1}-\eqref{ADMM1_eqn5}. 
\end{proof}

\end{document}